\newcommand{\tr}{\operatorname{trace}}
\theoremstyle{definition}
\newtheorem{defn}{Definition}
\newtheorem{lem}[defn]{Lemma}
\theoremstyle{plain}
\newtheorem{thm}[defn]{Theorem}
\newtheorem{prop}[defn]{Proposition}
\newtheorem{cor}[defn]{Corollary}
\begin{document}

\title{A class of Grassmannian fusion frames}
\author{Emily J. King\thanks{This author was supported in part by a fellowship from the Alexander von Humboldt Foundation -- as well as a Department of Education GAANN Fellowship and a University of Maryland Graduate School Ann G. Wylie Dissertation Fellowship.}}

\maketitle

\begin{abstract}Transmitted data may be corrupted by both noise and data loss.  Grassmannian frames are in some sense optimal representations of data transmitted over a noisy channel that may lose some of the transmitted coefficients.  Fusion frame (or frame of subspaces) theory is a new area that has potential to be applied to problems in such fields as distributed sensing and parallel processing.  Grassmannian fusion frames combine elements from both theories.  A simple, novel construction of Grassmannian fusion frames with an extension to Grassmannian fusion frames with local frames shall be presented.  Some connections to sparse representations shall also be discussed. \textbf{Keywords:} fusion frame, Grassmannian packing, Walsh sequences, Hadamard matrices, Grassmannian fusion frame, robustness. \textbf{AMS Classification:} 42C15, 15B34, 14M15
\end{abstract}

\section{Introduction}
\subsection{Motivation}
When data is transmitted over a communication line, the received message may be corrupted by noise and data loss.  As an oversimplified example, if the message \verb+1729+ is sent to you, you could receive the noisy message \verb+1728+ or nothing at all.  Representing data in a way that is resilient to such problems is clearly desirable.  Expressing data using a redundant frame provides some protection, but some frames work better than others.  Grassmannian frames are examples of such superior representations.  Fusion frames are objects which may be applied to a number of different fields and afford representations of higher dimensional measurements, as opposed to the one dimensional measurements of a traditional frame.  We begin by laying a foundation in frame theory in Section \ref{sec:frame}. Then we specialize in Section \ref{sec:grass} and generalize in Section \ref{sec:fus} by introducing Grassmannian frames and fusion frames. The main result, a construction of Grassmannian fusion frames utilizing Hadamard matrices, is in Section \ref{sec:const} and is followed by an extension of the construction to a system robust to local erasures in Section~\ref{sec:local}.  Concluding remarks and a comment about sparsity are in Section \ref{sec:con}.
\subsection{Frames}\label{sec:frame}
Let $\mathbb{F}$ be either $\mathbb{R}$ or $\mathbb{C}$.
\begin{defn} 
A sequence $\lbrace e_{i} \rbrace _{i=1}^N$ in $\mathbb{F}^M$ is a \emph{frame} for $\mathbb{F}^M$ if there exist constants $0<A\leq B < \infty$ such that
\begin{equation} \label{eqn:1}
\forall x \in \mathbb{F}^M\textrm{, } \qquad A\Vert x \Vert ^2 \leq \sum_{i = 1}^N \vert \langle x,e_{i} \rangle \vert ^2 \leq B\Vert x \Vert ^2 {.}
\end{equation}
$A$ and $B$ are called \emph{frame bounds}.  A frame is \emph{tight} if $A$ and $B$ may be chosen so that $A = B$ and \emph{Parseval} if $A=B=1$.  A frame is \emph{unit-norm} if $\Vert e_{i} \Vert = 1$ for $1 \leq i \leq N$.  A frame is \emph{equiangular} if for some $\alpha$, $\vert \langle e_i, e_j \rangle \vert = \alpha$ for all $i \neq j$. 
\end{defn}
Every orthonormal basis is a frame. One may view frames as generalizations of orthonormal bases which mimic the reconstruction properties (i.e.: $\forall x, x= \sum \langle x, e_i \rangle e_i$)  of orthonormal bases but may have some redundancy.   We remark that $\{e_i \}$ is a tight frame with frame bound $A$ if and only if
\begin{equation}\label{eqn:tight}
\forall x \in \mathbb{F}^M\textrm{, } \qquad AI_M x= \sum_{i=1}^N \langle x,e_{i} \rangle e_i, 
\end{equation}
where $I_M$ is the identity matrix for $\mathbb{F}^M$.

\section{Grassmannian fusion frames}

\subsection{Grassmannian frames}\label{sec:grass}
Goyal \emph{et al.}~proved that a unit-norm frame is \emph{optimally robust against} (\emph{o.r.a.}) noise and one erasure if the frame is tight \cite{GKK01}.  Furthermore, a unit-norm frame is o.r.a.~multiple erasures if it is Grassmannian \cite{StH03,BKo06}.
\begin{defn}
Define 
\begin{equation*}
\mathscr{F}(N, \mathbb{F}^M) = \lbrace \lbrace e_i \rbrace_{i=1}^N \subset \mathbb{F}^M :  \textrm{$\lbrace e_i \rbrace$ is a unit-norm frame for $\mathbb{F}^M$}\rbrace.
\end{equation*}
The \emph{maximal frame correlation} is 
\begin{equation}\label{eqn:corr}
\mathcal{M}_\infty (\lbrace e_i \rbrace_{i=1}^N) = \max_{1\leq i < j \leq N} \lbrace \vert \langle e_i, e_j \rangle \vert \rbrace.
\end{equation}
A sequence of unit-norm vectors $\lbrace u_i \rbrace_{i=1}^N \subset \mathbb{F}^M$ is called a \emph{Grassmannian frame} if it is a solution to 
\begin{equation*}
\min_{\lbrace e_i \rbrace \in \mathscr{F}(N, \mathbb{F}^M)} \lbrace \mathcal{M}_\infty (\lbrace e_i \rbrace_{i=1}^N) \rbrace .
\end{equation*}
\end{defn}
If $N = M$, the Grassmannian frames are precisely the orthonormal bases for $\mathbb{F}^M$.  If $N = 3$ and $M = 2$, the $2$-dimensional real vectors representing the cubic roots of unity are a Grassmannian frame.  However, the vectors representing the fourth roots of unity do not form a Grassmannian frame for $N = 4$ and $M = 2$.  Since $|\langle (1,0) , (-1,0) \rangle | = 1$, the fourth roots of unity actually have the largest possible maximal frame correlation.  The following theorem is proven in a number of classical texts, see \cite{StH03} for one proof and citations of other methods.
\begin{thm}
Let $\lbrace e_i \rbrace_{i=1}^N$ be a unit-norm frame for $\mathbb{F}^M$.  Then
\begin{equation}\label{eqn:max_corr}
\mathcal{M}_\infty (\lbrace e_i \rbrace_{i=1}^N ) \geq \sqrt{\frac{N-M}{M(N-1)}}
\end{equation}
\end{thm}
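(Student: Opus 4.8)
The plan is to run the classical Welch-bound argument: pass between the frame operator and the Gram matrix via traces, apply Cauchy--Schwarz to the eigenvalues, and then compare a maximum against an average.

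First I would assemble the vectors into the synthesis matrix $V = [\,e_1 \mid e_2 \mid \cdots \mid e_N\,]$, an $M \times N$ matrix, and consider two associated positive semidefinite matrices: the frame operator $S = VV^*$, of size $M \times M$, and the Gram matrix $G = V^*V$, of size $N \times N$, whose $(i,j)$ entry is $\langle e_j, e_i \rangle$. The unit-norm hypothesis forces every diagonal entry of $G$ to equal $1$, so $\tr(S) = \tr(G) = N$. Since $VV^*$ and $V^*V$ share the same nonzero eigenvalues, I also get $\tr(S^2) = \tr(G^2) = \sum_{i,j} |\langle e_i, e_j \rangle|^2$, using that $G$ is Hermitian.

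The crucial inequality is Cauchy--Schwarz applied to the eigenvalues of $S$: because $S$ is a positive semidefinite $M \times M$ matrix, it has at most $M$ nonnegative eigenvalues, so $(\tr S)^2 \leq M\,\tr(S^2)$. Substituting the two trace computations yields $N^2 \leq M \sum_{i,j} |\langle e_i, e_j \rangle|^2$. I would then peel off the $N$ diagonal terms, each equal to $1$, to isolate the off-diagonal mass:
\[
\sum_{i \neq j} |\langle e_i, e_j \rangle|^2 \;\geq\; \frac{N^2}{M} - N \;=\; \frac{N(N-M)}{M}.
\]

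Finally I would bound the maximum by the average over the $N(N-1)$ ordered off-diagonal pairs:
\[
\mathcal{M}_\infty(\{e_i\})^2 \;\geq\; \frac{1}{N(N-1)} \sum_{i \neq j} |\langle e_i, e_j \rangle|^2 \;\geq\; \frac{N-M}{M(N-1)},
\]
and take square roots to conclude. I expect the only genuine subtlety to be the eigenvalue Cauchy--Schwarz step: it is exactly what injects the factor $M$ into the bound, and it is valid precisely because the frame operator lives in the $M$-dimensional space. Everything else is bookkeeping; I note in passing that $N \geq M$ holds for any frame, so the quantity under the radical is nonnegative.
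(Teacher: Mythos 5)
Your proof is correct. The paper itself does not supply a proof of this theorem --- it defers to the cited literature (Strohmer--Heath) --- and your argument is exactly the standard Welch-bound derivation given there: equate $\tr(VV^*)$ with $\tr(V^*V)=N$, apply Cauchy--Schwarz to the $M$ eigenvalues of the frame operator to get $N^2 \le M\sum_{i,j}|\langle e_i,e_j\rangle|^2$, subtract the diagonal, and bound the maximum below by the average. All steps check out, including the observation that $N \ge M$ for any frame so the radicand is nonnegative.
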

Equality holds in (\ref{eqn:max_corr}) if and only if $\lbrace e_i \rbrace$ is an equiangular tight frame. Thus, equiangular unit-norm frames are automatically Grassmannian frames.  However, equality can only hold for certain $N$ and $M$.  Bodmann and Paulsen proved a functorial equivalence between real equiangular frames and $\alpha$-regular $2$-graphs, where $\alpha$ depends on $N$ and $M$ \cite{BP05}.  An $\alpha$-regular $2$-graph is a particular type of hypergraph.  This correspondence can be used to characterize when equiangular frames exist.  Other than the case $N = M +1$, there are very few known pairs $(N, M)$ which yield equiangular frames.  Further, there are many pairs $(N,M)$ for which it has been proven that no equiangular frames exist.  When equiangular frames do not exist, it can be complicated to construct Grassmannian frames.
\begin{defn}
For $1 \leq m \leq M$, set $G(M,m)$ to be the collection of $m$ dimensional subspaces of $\mathbb{F}^M$. 
$G(M,m)$ is called a \emph{Grassmannian}.  $G(M,m)$ may be endowed with many mathematical structures, but we shall only be concerned with the metric space structure induced by the \emph{chordal distance} (see, \cite{GrassPack})
\begin{equation*}
\operatorname{dist}(\mathcal{W}_i,\mathcal{W}_j) = [m - \tr(P_i P_j)]^{1/2},
\end{equation*}
for $\mathcal{W}_i, \mathcal{W}_j \in G(M,m)$, where $P_i$ is the orthogonal projection onto $\mathcal{W}_i$.  
\end{defn}
The \emph{Grassmannian packing problem} is the problem of finding $N$ elements in $G(M,m)$ so that the minimal distance between any two of them is as large as possible.  A numerical approach to solving this problem may be found in \cite{DHST08} and a list of best-known packings is posted on \cite{Sloane}.  Finding Grassmannian frames is equivalent to solving the Grassmannian packing problem for $G(M,1)$ restricted to unit-norm frames.  It is natural to ask what analytic structures one obtains by considering the Grassmannian packing problem for $m > 1$. 

We may also consider other geometric relationships among elements of a given Grassmannian which also generalize equiangularity.  The following definition is found in \cite{BCPST} and is equivalent to alternate definitions found in \cite{BEt07} and \cite{LemSei73}.
\begin{defn}
Two $m$ dimensional subspaces $\mathcal{W}_i, \mathcal{W}_{j} \subseteq \mathbb{F}^M$ are \emph{isoclinic} with parameter $\lambda$ if the angle $\theta$ between any $f \in \mathcal{W}_i$ and its orthogonal projection $P_{j}f$ in $\mathcal{W}_{j}$ is unique with $\cos^2(\theta) = \lambda$.  A collection of subspaces is \emph{equi-isoclinic} if they are pairwise isoclinic with the same parameter $\lambda$.
\end{defn} 
\subsection{Fusion frames}\label{sec:fus}
\emph{Fusion frames}, originally called \emph{frames of subspaces}, were introduced by Casazza and Kutyniok in \cite{CaK04}.  There are many potentially exciting applications of fusion frames in areas such as coding theory \cite{Bod07}, distributed sensing \cite{CKL08}, and neurology \cite{GJR06}.  Please see Chapter 13 of \cite{CaKBook} for an overview of the state of the art in fusion frames.
\begin{defn}
A \emph{fusion frame (with respect to $\lbrace \nu_i \rbrace_{i=1}^N$)} for $\mathbb{F}^M$ is a finite collection of subspaces $\lbrace \mathcal{W}_i \rbrace_{i=1}^N$ in $\mathbb{F}^M$ and positive weights $\lbrace \nu_i \rbrace_{i=1}^N$ such that there exist $0 < A \leq B < \infty$ satisfying
\begin{equation*}
A\Vert x \Vert^2 \leq \sum_{i=1}^N \nu_i^2\Vert P_i x \Vert^2 \leq B \Vert x \Vert^2,
\end{equation*}
where $P_i$ is an orthogonal projection onto $\mathcal{W}_i$. If $A=B$, we say that the fusion frame is \emph{tight}. If $A=B=1$, then the fusion frame is called \emph{Parseval}.  Parseval fusion frames are referred to as \emph{weighted projective resolutions of the identity} in \cite{Bod07}.

A tight fusion frame consisting of equal dimensional subspaces with equal pairwise chordal distances is an \emph{equi-distance tight fusion frame}.
\end{defn}
Similar to (\ref{eqn:tight}), a fusion frame is tight with bound $A$ if and only if 
\begin{equation}\label{eqn:tight2}
\sum_{i=1}^N \nu_i P_i = A I_M.
\end{equation}
For the remainder of this paper, we will assume that our weights $\nu_i$ are all equal to $1$ (as in done in \cite{GrassFus} and elsewhere).  Fusion frames may either be viewed as generalizations of frames or special types of frames.  In the former sense, we are merely replacing the projections (modulo constant multiples) of vectors  $x \in \mathbb{F}^M$ (on line (\ref{eqn:1})) onto the subspace spanned by each frame vector with projections onto spaces of dimensions possibly higher than $1$.  In the latter sense, we may see a fusion frame as a frame with subcollections of frame vectors which group in \emph{nice} ways.  As is common in the literature, \emph{nice} is an oversimplification of some very deep properties.  In fact, splitting frames into such sub-collections is related to the (in)famous Feichtinger Conjecture \cite{CKS08}.  We shall make use of the following simple lemma in later proofs.
\begin{lem} \label{lem:tight}
Assume that $\lbrace \mathcal{W}_i \rbrace_{i=1}^N$ is a collection of subspaces in $\mathbb{F}^M$ with corresponding orthonormal bases $\lbrace e^i_j \rbrace_{j=1}^{m_i}$.  Then $\lbrace \mathcal{W}_i \rbrace_{i=1}^N$ is a tight fusion frame if
\begin{equation*}
L^t = \left( e^1_1 \vert e^1_2 \vert \cdots \vert e^1_{m_1} \vert e^2_1 \vert \cdots \vert e^{N}_{m_N}  \right)
\end{equation*}
has equal-norm orthogonal rows.
\end{lem}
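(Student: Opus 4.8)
The plan is to verify the tight fusion frame criterion (\ref{eqn:tight2}) directly, namely to exhibit a constant $A > 0$ with $\sum_{i=1}^N P_i = A I_M$, by computing the product of $L^t$ with its conjugate transpose and reading off the hypothesis about its rows.

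First I would record the outer-product form of each projection. Since $\lbrace e^i_j \rbrace_{j=1}^{m_i}$ is an orthonormal basis for $\mathcal{W}_i$, the orthogonal projection onto $\mathcal{W}_i$ is $P_i = \sum_{j=1}^{m_i} e^i_j (e^i_j)^*$, where each $e^i_j$ is regarded as a column vector and $(\cdot)^*$ denotes the conjugate transpose. Summing over $i$ gives $\sum_{i=1}^N P_i = \sum_{i=1}^N \sum_{j=1}^{m_i} e^i_j (e^i_j)^*$, which is precisely the sum of the outer products of every column of $L^t$ with itself.

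Next I would use the elementary identity that any matrix with columns $v_1, \dots, v_K$ satisfies (matrix)(matrix)$^* = \sum_k v_k v_k^*$. Applied to $L^t$, whose $K = \sum_i m_i$ columns are exactly the vectors $e^i_j$, this yields $L^t (L^t)^* = \sum_{i,j} e^i_j (e^i_j)^* = \sum_{i=1}^N P_i$. This reduces tightness to an understanding of the single matrix $L^t (L^t)^*$. The $(a,b)$ entry of $L^t (L^t)^*$ is the Hermitian inner product of row $a$ with row $b$ of $L^t$; hence the assumption that $L^t$ has pairwise orthogonal rows of common squared norm $A$ forces all off-diagonal entries to vanish and all diagonal entries to equal $A$, so $L^t (L^t)^* = A I_M$.

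Finally I would confirm that $A$ is a legitimate positive frame bound. Comparing Frobenius norms, the sum of the squared norms of the $\sum_i m_i$ columns (each a unit vector) equals the sum of the squared norms of the $M$ rows (each equal to $A$), so $A = \left(\sum_i m_i\right)/M > 0$. Combining with the previous step gives $\sum_{i=1}^N P_i = A I_M$ with $A > 0$, and (\ref{eqn:tight2}) then certifies that $\lbrace \mathcal{W}_i \rbrace_{i=1}^N$ is a tight fusion frame. The computation is essentially bookkeeping, and I expect no genuine obstacle; the only point demanding care is the consistent use of the conjugate transpose rather than the plain transpose when $\mathbb{F} = \mathbb{C}$, so that the outer products truly recover the projections and the row orthogonality is measured in the correct inner product.
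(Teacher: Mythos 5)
Your proposal is correct and follows essentially the same route as the paper: both identify $\sum_i P_i$ with $L^t(L^t)^*$ and read off the entries of that product as inner products of the rows of $L^t$, so the equal-norm orthogonality hypothesis immediately yields $A I_M$. Your added remarks (the explicit value $A = (\sum_i m_i)/M$ and the care with conjugate transpose over $\mathbb{C}$) are sensible refinements but do not change the argument.
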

\begin{proof}
It suffices to show that (\ref{eqn:tight2}) holds.  Label the rows of $L^t$ as $\{ f_1, f_2, \hdots, f_M \}$ and the rows of  $\left( e^i_1 \vert e^i_2 \vert \cdots \vert e^i_{m_i} \right)$ as $\lbrace f^i_1, f^i_2, \cdots, f^i_M \rbrace$
Then
\begin{eqnarray*}
P_i & = & \left( e^i_{1} \vert e^i_2 \vert \cdots \vert e^i_{m_i} \right)\left( e^i_{1} \vert e^i_2 \vert \cdots \vert e^i_{m_i} \right)^t  \\
& = &  \left( f^i_{1} \vert f^i_2 \vert \cdots \vert f^i_M \right)^t\left( f^i_{1} \vert f^i_2 \vert \cdots \vert f^i_M \right) \\
& = & \left( \langle f^i_k , f^i_\ell \rangle \right)_{k,\ell},
\end{eqnarray*}
and
\begin{eqnarray*}
\sum_{i=1}^N P_i & = & \left(  \sum_{i=1}^N  \langle f^i_k , f^i_\ell \rangle \right)_{k,\ell} = \left( \langle f_k, f_\ell \rangle \right)_{k,\ell}.
\end{eqnarray*}
\end{proof}

\subsection{Robustness} 
As mentioned in the introduction, data may be corrupted by noise and/or erasures.  In order to measure the resilience of a fusion frame representation of data against noise and erasures, one may use a deterministic or a stochastic signal model.  The former case was analyzed in \cite{Bod07} and the latter in \cite{GrassFus}.  The specific models are explained in the respective papers.  We simply summarize the results concerning optimally robust fusion frames. With a deterministic signal model, a  equi-dimensional tight fusion frame is o.r.a.~one subspace erasure if the weights $\nu_i$ are equal and o.r.a.~two subspace erasures if it is equi-isoclinic.   Furthermore, we are interested in when the original signal can be recovered exactly when the measurement on one subspace is lost or corrupted.
\begin{defn}
Let $\{ \mathcal{W}_i \}_{i = 1}^N$ be a tight fusion frame for $\mathbb{F}^M$ with $\dim \mathcal{W}_i = m$ for each $1 \leq i \leq N$.  For each $1 \leq i \leq N$, fix an orthonormal basis for $\mathcal{W}_i$, $\{e^i_k \}_{k=1}^m$.  Then  $\{ \mathcal{W}_i \}_{i = 1}^N$ is \emph{($1$-subspace-loss) correctible} if for any $1 \leq j \leq N$, the mapping 
\begin{equation*}
x \mapsto \oplus_{i = 1}^N (1-\delta_{ij}) P_i x
\end{equation*}
has a left inverse. 
\end{defn}
With the notation used above, an equi-isoclinic fusion frame is $1$-subspace-loss correctible if $mN > M$ and $N > 1$ \cite{Bod07}.  With a stochastic signal model, a fusion frame is o.r.a.~against noise if it is tight, a tight fusion frame is o.r.a.~one subspace erasure if the dimensions of the subspaces are equal, and a tight fusion frame is o.r.a.~multiple subspace erasures if the subspaces are maximally equi-distance.   A maximal equi-distance tight fusion frame is a solution to the Grassmannian packing problem \cite{GrassFus}.  Motivated by this fact, we introduce the following definition which does not explicitly exist in the literature but is simply a succinct label for objects discussed in \cite{GrassFus}.
\begin{defn}
A fusion frame $\lbrace \mathcal{W}_i \rbrace_{i=1}^N$ for $\mathbb{F}^M$ consisting of $d$-dimensional subspaces shall be called a \emph{Grassmannian fusion frame} if it is a solution to the Grassmannian packing problem for $N$ points in $G(M,m)$.
\end{defn}
Note that an equi-isoclinic fusion frame is equi-(chordal)distance and thus Grassmannian.  In \cite{Bod07}, equi-isoclinic Parseval fusion frames are called $2$-\emph{uniform}, parallel to the terminology used for Grassmannian frames in \cite{BP05}.

We would like to construct fusion frames which are robust under either model to noise and multiple erasures.  The idea behind the construction is very simple and makes use of Hadamard matrices.
\section{Hadamard construction}\label{sec:const}
\subsection{Hadamard matrices}
The first Hadamard matrices were discovered by Sylvester in 1867 \cite{Syl}.  In 1893, Hadamard first defined and started to characterize Hadamard matrices, which have the maximal possible determinant amongst matrices with entries from $\{\pm1\}$ \cite{Had}.
\begin{defn}
A \emph{Hadamard matrix} of order $n$ is an $n \times n$ matrix $H$ with entries from $\{\pm 1\}$ such that $HH^t=nI_n$.
\end{defn}
In the original paper by Hadamard, it was proven that Hadamard matrices must have order equal to $2$ or a multiple of $4$.  It is still an open conjecture as to whether Hadamard matrices exist for every dimension divisible by $4$.  However, there are many constructions of Hadamard matrices, which use methods from number theory, group cohomology and other areas of math.  Horadam's book \cite{HadMat} is an excellent resource.  One class of Hadamard matrices are formed from Walsh functions.
\begin{defn}
The \emph{Walsh functions} $\omega_j : [0,1] \rightarrow \{\pm 1\} \subset \mathbb{R}$, $j \geq 0$, are piecewise constant functions which have the following properties:
\begin{enumerate}
\item $\omega_j (0) = 1$ for all $j \geq 0$,
\item $\omega_j$ has precisely $j$ sign changes (zero crossings), and
\item $\langle \omega_j, \omega_k \rangle = \delta_{jk}$.
\end{enumerate}
\end{defn}
Walsh functions have been used for over 100 years by communications engineers to minimize cross talk.  The first four Walsh functions are
\begin{eqnarray*}
\omega_0 & = & \mathbbm{1}_{[0,1]}, \\
\omega_1 & = & \mathbbm{1}_{[0,1/2)} -\mathbbm{1}_{[1/2,1]}, \\
\omega_2 & = & \mathbbm{1}_{[0,1/4)} -\mathbbm{1}_{[1/4,3/4)}+\mathbbm{1}_{[3/4,1]}, \textrm{ and} \\
\omega_3 & = & \mathbbm{1}_{[0,1/4)} -\mathbbm{1}_{[1/4,1/2)}+\mathbbm{1}_{[1/2,3/4)} -\mathbbm{1}_{[3/4,1]}, 
\end{eqnarray*}
where $\mathbbm{1}_A$ is the function that takes the value $1$ on $A$ and $0$ off of $A$, $A$ measurable.  
By sampling the first $2^n$ Walsh functions at the points $\frac{k}{2^n}$, one obtains a Hadamard matrix; that is,
\begin{equation*}
W_n = \left(\omega_j \left(\frac{k}{2^n}\right)\right)_{0 \leq j,k < 2^n}
\end{equation*}
is a $2^n \times 2^n$ Hadamard matrix.  There is a speedy algorithm, the Fast Hadamard (or Walsh) Transform, for multiplying a vector by such a matrix.
\subsection{Construction}
The idea behind the new construction of the Grassmannian fusion frames is very simple.  We remove the first $j$ rows of a Hadamard matrix $H$ to obtain a submatrix $H'$.  The columns of $H'$ will then be partitioned into spanning sets for subspaces.  Since the elements of the matrix are $\pm1$, the computations should be simplified and the resulting (fusion) frame should be easy to implement.  This construction can be seen as similar to Naimark Complementation \cite{CCHK09}, although we start out knowing the larger Hilbert space.  It is well-known (see, for example, \cite{HadMat}) that any Hadamard matrix can be normalized so that the first row consists solely of $1$s.  
\begin{prop}
Let $H$ be an $n \times n$ Hadamard matrix indexed by $0, \hdots , n  -1$ which has been normalized so that the first row consists solely of $1$s.  Then 
\begin{equation*}
\lbrace e_i = \frac{1}{\sqrt{n -1}}(H(i,k))_{1 \leq j \leq n -1} : 0 \leq i \leq n -1 \rbrace
\end{equation*}
is a unit-norm Grassmannian frame for $\mathbb{F}^{n-1}$ with frame bound $\frac{n}{n-1}$.  In particular, the frame is equiangular.
\end{prop}
\begin{proof}
For any $0 \leq i_1, i_2 \leq n -1$, 
\begin{eqnarray*}
\langle H(j,i_1))_{0 \leq j \leq n -1}, H(j,i_2))_{0 \leq j \leq n -1}\rangle & = & n \delta_{i_1,i_2} \textrm{ and}\\
\langle H(i_1,j))_{0 \leq j \leq n -1}, H(i_2,j))_{0 \leq j \leq 2^n -1}\rangle & = & n\delta_{i_1,i_2}.
\end{eqnarray*}
Thus $\langle e_{i_1} , e_{i_2} \rangle = \frac{1}{n -1} (n\delta_{i_1,i_2} -1)$, and the collection is equiangular.  We now show that the $e_i$ do indeed form a frame. Let $x \in \mathbb{F}^{n -1}$ be arbitrary.  We verify that (\ref{eqn:tight}) holds.  Let 
\begin{equation*}
L = \left(\frac{1}{\sqrt{n -1}}(H(j,i)) \right)_{0\leq i \leq n-1, 1 \leq j \leq n-1}.
\end{equation*}
Then, by the orthogonality of the columns,
\begin{equation*}
\sum_{i=0}^{n-1} \langle x , e_i \rangle e_i = L^t L x = \frac{n}{n -1} x.
\end{equation*}
\end{proof}
We now present the main construction.
\begin{thm}\label{thm:grass}
Let $W_n$ be the $2^n \times 2^n$ Walsh-Hadamard matrix indexed by $0, \hdots , 2^n  -1$.  Then 
\begin{equation*}
\left\{ \mathcal{W}_i = \operatorname{span}_{0 \leq k \leq 2^m - 1} \left\{ (W_n(j,i + k2^{n-m}))_{2^m \leq j \leq 2^n-1} \right\} \right\}_{0 \leq i \leq 2^{n -m}- 1}
\end{equation*}
is a tight Grassmannian fusion frame for $\mathbb{F}^{2^n-2^m}$ with frame bound $\frac{2^n}{2^n - 2^m}$ consisting of $2^{n-m}$ $2^m$-dimensional subspaces which is equi-isoclinic with parameter $\lambda = \left( \frac{2^m}{2^n - 2^m}\right)^2$.
\end{thm}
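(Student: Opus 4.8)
The plan is to reduce the entire statement to a single inner-product computation between truncated columns of $W_n$, exploiting the multiresolution structure of the Walsh system. Write $c_l \in \mathbb{F}^{2^n}$ for the $l$-th column of $W_n$ and split it as $c_l = (r_l, k_l)$, where $r_l \in \mathbb{F}^{2^m}$ collects the removed entries (rows $0,\dots,2^m-1$) and $k_l \in \mathbb{F}^{2^n-2^m}$ the kept entries; the $k_l$ are exactly the spanning vectors appearing in the definition of the $\mathcal{W}_i$. Since the columns of $W_n$ are orthogonal with squared norm $2^n$, one has $\langle k_l, k_{l'}\rangle = 2^n\delta_{l,l'} - \langle r_l, r_{l'}\rangle$, so everything rests on understanding $\langle r_l, r_{l'}\rangle$.

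The key step, which I expect to be the main obstacle, is to prove that $\langle r_l, r_{l'}\rangle = 2^m$ when $\lfloor l/2^{n-m}\rfloor = \lfloor l'/2^{n-m}\rfloor$ and $0$ otherwise. This is exactly where the sequency (sign-change) ordering is essential: the removed functions $\omega_0,\dots,\omega_{2^m-1}$ are precisely the Walsh functions constant on every dyadic interval of length $2^{-m}$, so they form an orthonormal basis of the space $V_m$ of such step functions. Writing $v_{j,p}$ for the value of $\omega_j$ on the $p$-th block, $L^2[0,1]$-orthonormality gives $\sum_p v_{j,p}v_{j',p} = 2^m\delta_{j,j'}$, hence the dual relation $\sum_{j<2^m} v_{j,p}v_{j,p'} = 2^m\delta_{p,p'}$; since $r_l$ records the values of $\omega_0,\dots,\omega_{2^m-1}$ at the sample point $l/2^n$, which lies in block $p=\lfloor l/2^{n-m}\rfloor$, this is exactly the claimed formula. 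I would isolate this as a short preliminary lemma, citing the standard multiresolution fact that $\{\omega_j\}_{j<2^m}$ spans $V_m$.

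With this in hand the rest is bookkeeping on the groups $G_i = \{i + k2^{n-m} : 0 \le k \le 2^m-1\}$. For $l = i + k2^{n-m}$ one has $l \equiv i \pmod{2^{n-m}}$ and $\lfloor l/2^{n-m}\rfloor = k$, so two distinct columns in the same group lie in different blocks and the formula gives $\langle k_l, k_{l'}\rangle = 0$; thus the $2^m$ spanning vectors of each $\mathcal{W}_i$ are orthogonal and nonzero. This simultaneously proves $\dim\mathcal{W}_i = 2^m$ and exhibits the orthonormal basis $e^i_k = (2^n-2^m)^{-1/2} k_{i+k2^{n-m}}$. Assembling these bases, the matrix $L^t$ of Lemma \ref{lem:tight} is, up to a column permutation and the scalar $(2^n-2^m)^{-1/2}$, simply the last $2^n-2^m$ rows of $W_n$; those rows are orthogonal with common squared norm $2^n$, so $L^t$ has equal-norm orthogonal rows and Lemma \ref{lem:tight} yields a tight fusion frame, with $\sum_i P_i = \frac{2^n}{2^n-2^m}I$ reading off the frame bound via (\ref{eqn:tight2}).

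It remains to verify the equi-distance property. For $i \ne j$ the columns of $G_i$ and $G_j$ have different residues modulo $2^{n-m}$, so $\delta_{l,l'}=0$ and the formula gives $\langle e^i_k, e^j_{k'}\rangle = -2^m(2^n-2^m)^{-1}\delta_{k,k'}$. Hence $\operatorname{tr}(P_i P_j) = \sum_{k,k'}|\langle e^i_k, e^j_{k'}\rangle|^2 = 2^m\cdot\frac{2^{2m}}{(2^n-2^m)^2} = \frac{2^{3m}}{(2^n-2^m)^2}$, which is independent of the pair $(i,j)$. Since the subspace dimension is also constant, all pairwise chordal distances coincide, so $\{\mathcal{W}_i\}$ is an equi-distance tight fusion frame; by the cited fact that such frames solve the Grassmannian packing problem \cite{GrassFus} it is a Grassmannian fusion frame, completing the proof.
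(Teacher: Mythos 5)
Your proposal is correct and follows essentially the same route as the paper: both split each column of $W_n$ into its removed and kept parts, use the fact that the first $2^m$ rows sample Walsh functions constant on dyadic blocks of length $2^{-m}$ (so the truncated tops are repeated columns of $W_m$, giving $\langle r_l, r_{l'}\rangle = 2^m\delta_{p(l),p(l')}$), invoke Lemma~\ref{lem:tight} for tightness, and compute $\operatorname{tr}(P_iP_j)$ to establish equi-distance. The only cosmetic differences are that you compute $\operatorname{tr}(P_iP_j)$ as $\sum_{k,k'}|\langle e^i_k,e^j_{k'}\rangle|^2$ rather than via the cross-Gram matrix, and you make explicit the final appeal to the fact that equi-distance tight fusion frames solve the Grassmannian packing problem.
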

\begin{proof}
Note that the first $2^m$ Walsh functions are constants over intervals of the form $[\frac{\ell}{2^m},\frac{\ell + 1}{2^m} )$, $\ell \in \mathbb{Z}$.  Since the $2^n \times 2^n$ Walsh-Hadamard matrix is generated by sampling at points of the form $\frac{\ell}{2^n}$, the submatrix of $W_n$ formed from the first $2^m$ rows consists of the first  column of $W_m$, repeated $2^{n-m}$ times, followed by the second column of $W_m$, repeated $2^{n-m}$ times, and so on.  Thus, for $0 \leq i_1, i_2 \leq 2^{n-m} -1$,
\begin{equation*}
\langle (W_n(j, i_1 + k_1 2^{n-m}))_{0 \leq j \leq 2^m - 1}, (W_n(j, i_2 + k_2 2^{n-m}))_{0 \leq j \leq 2^m - 1} \rangle = 2^m \delta_{k_1, k_2},  
\end{equation*}
which implies
\begin{eqnarray}
&&\langle (W_n(j, i_1 + k_1 2^{n-m}))_{2^m \leq j \leq 2^n - 1}, (W_n(j, i_2 + k_2 2^{n-m}))_{2^m \leq j \leq 2^n - 1} \rangle \nonumber \\
&&\quad = 2^n \delta_{i_1,i_2}\delta_{k_1,k_2} - 2^m \delta_{k_1, k_2}.  \label{eqn:innpro}
\end{eqnarray}
Set $e^i_k = \frac{1}{\sqrt{2^n - 2^m}}(W_n(j, i + k 2^{n-m}))_{2^m \leq j \leq 2^n - 1}$.  For each $0 \leq i \leq 2^{n-m} - 1$, $\{ e^i_k \}_{0 \leq k \leq 2^m - 1}$
is an orthonormal basis for $\mathcal{W}_i$.  It follows from Lemma~\ref{lem:tight} that $\{ \mathcal{W}_i\}_{i=0}^{2^{n-m}-1}$ is a tight fusion frame with frame bound  $\frac{2^n}{2^n - 2^m}$.  For each $0 \leq i \leq 2^{n-m} - 1$ define
\begin{equation*}
L_i = \left( e^{i}_{0} \vert e^{i}_1 \vert \cdots \vert e^{i}_{2^m-1} \right)^t. 
\end{equation*}
Then for $i_1 \neq i_2$, the trace of $P_{i_1}P_{i_2} $ is
\begin{eqnarray*}
&& \tr L_{i_1}^t L_{i_1} L_{i_2}^t L_{i_2} \\
&& = \tr  L_{i_1}^t  \left(\frac{-2^m}{2^n - 2^m} I_{2^m} \right) L_{i_2}\\
&& = \frac{-2^m}{2^n - 2^m} \sum_{k=0}^{2^m -1 } \langle e_k^{i_1}, e_k^{i_2} \rangle  =  \frac{-2^m}{2^n - 2^m} \left( 2^m \frac{-2^m}{2^n - 2^m} \right)  =  \frac{2^m}{(2^{n-m} - 1)^2}.
\end{eqnarray*}
Thus between any two distinct $\mathcal{W}_i$, the chordal distance is $(2^m - \frac{2^m}{(2^{n-m} - 1)^2})^{1/2}$, and the $\mathcal{W}_i$ form a Grassmannian fusion frame.

All that remains to be shown is that $\{\mathcal{W}_i\}$ is equi-isoclinic.  For an arbitrary subspace $\mathcal{W}_i$, choose an arbitrary element $e_k^i$ of the orthonormal basis $\{e_{\ell}^i\}_{0\leq\ell\leq 2^m - 1}$.  We would like to show that for any $j \neq i$
\begin{equation*}
\frac{\langle e_k^i, P_j e_k^i \rangle }{\Vert e_k^i \Vert \Vert P_j e_k^i \Vert} = \frac{2^m}{2^n - 2^m},
\end{equation*}
independently of $i$, $j$, and $k$.  Using (\ref{eqn:innpro}), we calculate
\begin{equation*}
P_j e_k^i = \sum_{\ell = 0}^{2^m - 1} \langle e_k^i, e_{\ell}^j \rangle e_{\ell}^j = \sum_{\ell = 0}^{2^m - 1} \frac{-2^m}{2^n - 2^m} \delta_{k,\ell} e_{\ell}^j = \frac{-2^m}{2^n - 2^m}  e_k^j .
\end{equation*}
Thus, $\Vert P_j e_k^i \Vert = \frac{2^m}{2^n - 2^m}$, and it follows from (\ref{eqn:innpro}) that
\begin{equation*}
\langle e_k^i, P_j e_k^i \rangle = \frac{-2^m}{2^n - 2^m} \langle e_k^i, e_k^j \rangle = \left( \frac{2^m}{2^n - 2^m} \right)^2.
\end{equation*}
The claim is proven.
\end{proof}
Note that if $n = m+1$, the isoclinic parameter is $\lambda = 1$; that is, the fusion frame is trivial in that it simply contains two copies of the entire space.

We have the following corollary.
\begin{cor}\label{cor:erase}
Let  $\{\mathcal{W}_i\}_{0 \leq i \leq 2^{n-m}-1}$ be constructed as in Theorem~\ref{thm:grass} with $n > m+1$.  Then $\{\mathcal{W}_i\}_{0 \leq i \leq 2^{n-m}-1}$ is 1-subspace-loss correctible.
\end{cor}
\begin{proof}
The fusion frames are equip-isoclinic.  By \cite{Bod07}, they are 1-subspace-loss correctible.
\end{proof}

Unfortunately, it follows from Theorem 3.2 of \cite{CaK08} and Theorem 24 of \cite{Bod07} that any tight fusion frame for $\mathbb{F}^{2^n-2^m}$ with frame bound $\frac{2^n}{2^n - 2^m}$ consisting of $2^{n-m}$ $2^m$-dimensional subspaces is not $k$-subspace-loss correctible for $k \geq 2$.

It should be noted that although \cite{GrassFus} contains Grassmannian fusion frames constructed using Hadamard matrices, the way in which they are used here is completely different.  Specifically, in Example 5.1 of \cite{GrassFus}, the authors construct a Grassmannian fusion frame consisting of $3$-dimensional subspaces in in $\mathbb{R}^7$ using a $4 \times 4$ Hadamard matrix.  The construction makes use of the fact that $7$ is a prime and $3$ is the number of non-zero quadratic residues of $7$.  The construction presented in this paper creates fusion frames for spaces of dimension $2^n - 2^m$ of subspaces of dimension $2^m$.  It follows from elementary number theory that if $2^n - 2^m$ is a prime, then $n = 2, m = 1$ or $n > 1, m \neq 0$. Either way, $2^m$ would not be the number of non-zero quadratic residues.  Thus, the methods are mutually exclusive.

\section{Erasure of Local Frame Vectors}\label{sec:local}
When using a fusion frame to make measurements, one is recording a collection of multi-dimensional data, i.e. $\{ P_i x \}_i$.  Thus, a potential problem is that instead of an entire subspace being lost, only part of the data from a particular $P_i x$ is lost.  Section 4 of \cite{CaK08} deals with this sort of robustness. A natural way to introduce robustness is to represent $P_i x$ not by its coefficients with respect to an orthonormal basis of $\mathcal{W}_i$ but rather a frame of $\mathcal{W}_i$.  We call such a frame a \emph{local frame}.

In this section, we will scale our tight fusion frames with weights $v_i = 1$ to be Parseval fusion frames with weights $v_i \neq 1$.  We will also use the notation $\{f_{i,k}\}_{k=1}^{K_i}$ to denote a frame of a subspace $\mathcal{W}_i$.  Then we speak of a fusion frame system with local frames: 
\[
\{ \mathcal{W}_i, v_i, \{f_{i,k}\}_{k=1}^{K_i}\}_{i=1}^N.
\]

We now present a definition and theorem from \cite{CaK08}.
\begin{defn}
Let $W = \{ \mathcal{W}_i, v_i, \{f_{i,k}\}_{k=1}^{K_i}\}_{i=1}^N$ be a Parseval fusion frame system with local Parseval frames.  For each $1 \leq i \leq N$, define
\[
L_{F_i} = \left( f_{i,1} \vert f_{i,2} \vert \cdots \vert f_{i,K_i} \right)^t
\]
(the \emph{analysis operator} for $\{f_{i,k}\}_{k=1}^{K_i}$). Similarly, for each $1 \leq i \leq N$ and $1 \leq \hat{k} \leq K_i$, define
\[
L_{F_i}^{\hat{k}} = \left( f_{i,1} \vert f_{i,2} \vert \cdots  \vert f_{i,\hat{k}-1} \vert f_{i,\hat{k}+1}\vert\cdots \vert f_{i,K_i} \right)^t
\]
Then we define the associated \emph{1-erasure reconstruction error} 
\[
\mathcal{E}_1(W) = \max \left\{ \bigg\Vert Id - \left( (L_{F_{\hat{i}}}^{\hat{k}})^\ast  L_{F_{\hat{i}}}^{\hat{k}}+ \sum_{i=1, i\neq \hat{i}}^N v_i^2 L_{F_i}^\ast L_{F_i}\right) \bigg\Vert : 1 \leq \hat{i} \leq N, 1 \leq \hat{k} \leq K_{\hat{i}}\right\}.
\]
\end{defn}
\begin{thm}
Let $W = \{ \mathcal{W}_i, v_i, \{f_{i,k}\}_{k=1}^{K_i}\}_{i=1}^N$ be a Parseval fusion frame system with local Parseval frames.  Then the following conditions are equivalent.
\begin{itemize}
\item The Parseval frame system $W$ satisfies $\mathcal{E}_1(W) = \min \big\{\mathcal{E}_1 (\tilde{W}): \tilde{W} = \{ \tilde{\mathcal{W}}_i, v_i, \{f_{i,k}\}_{k=1}^{K_i}\}_{i=1}^N$ is a Parseval fusion frame system with local Parseval frames satisfying $\dim \tilde{W}_i = \dim W_i$ for all $1 \leq i \leq N.\big\}$.
\item We have
\[
\Vert f_{i,k} \Vert^2 = \frac{\dim W_i}{K_i} \textrm{ for all $1 \leq i \leq N$, $1 \leq k \leq K_i$}.
\]
\end{itemize}
\end{thm}
This theorem shows what should be intuitive, namely, that among Parseval fusion frame systems with local Parseval fusion frames which have same subspace dimensions and size of local frames, the ideal systems have local frames which are optimally robust to one erasure in the sense of frames \cite{GKK01}.  

Thus, to obtain a fusion frame system which is robust against both subspace erasures and local erasures, one could start with an equi-isoclinic Parseval fusion frame $\{ \mathcal{W}_i, v_i \}_{i=1}^N$ and choose an equal-norm Parseval frame $\{f_{i,k}\}_{k=1}^{K_i}$ for each $\mathcal{W}_i$.  In fact, one only needs to choose one equal-norm Parseval frame.

We first present a trivial corollary to Theorem~\ref{thm:grass}.
\begin{cor}\label{cor:grass}
Let $W_n$ be the $2^n \times 2^n$ Walsh-Hadamard matrix indexed by $0, \hdots , 2^n  -1$.  Then 
\begin{equation*}
\left\{ \mathcal{W}_i = \operatorname{span}_{0 \leq k \leq 2^m - 1} \left\{ (W_n(j,i + k2^{n-m}))_{2^m \leq j \leq 2^n-1} \right\}, \sqrt{\frac{2^n-2^m}{2^n}} \right\}_{0 \leq i \leq 2^{n -m}- 1}
\end{equation*}
is a Parseval Grassmannian fusion frame for $\mathbb{F}^{2^n-2^m}$  consisting of $2^{n-m}$ $2^m$-dimensional subspaces which is equi-isoclinic with parameter $\lambda = \left( \frac{2^m}{2^n - 2^m}\right)^2$.
\end{cor}
We now present our straightforward construction of a robust Parseval fusion frame with a local Parseval fusion frame.
\begin{prop}
Let $\{\mathcal{W}_i,v_i\}_i$ be as in Corollary~\ref{cor:grass}.  Fix $M > 2^m$.  Choose a Parseval equal-norm frame $\{ f_\ell \}_{\ell = 1}^{M}$ for $\mathbb{F}^{2^m}$.  Set
\[
L_F = \left( f_{1} \vert f_{2} \vert \cdots \vert f_{M} \right)^t,
\]
\[
e^i_k = \frac{1}{\sqrt{2^n - 2^m}}(W_n(j, i + k 2^{n-m}))_{2^m \leq j \leq 2^n - 1},
\]
and
\[
L_i = \left( e^{i}_{0} \vert e^{i}_1 \vert \cdots \vert e^{i}_{2^m-1} \right)^t. 
\]
Finally, for each $1 \leq i \leq N$ and $1 \leq \ell \leq M$, set $f_{i,\ell}$ to be the $\ell$th row of $L_F L_i$.  Then $ W = \{ \mathcal{W}_i, v_i, \{f_{i,\ell}\}_{\ell=1}^{M}\}_{i=1}^N$ is $1$-subspace-loss correctible and minimizes the local $1$-erasure reconstruction error.
\end{prop}
\begin{proof}
Since the new system has the same subspaces as the fusion frame constructed in Theorem~\ref{thm:grass}, $W$ is $1$-subspace-loss correctible by Corollary~\ref{cor:erase}.  We now must show that for each $1 \leq i \leq N$, $\{f_{i,\ell}\}_{\ell=1}^{M}$ is an equal-norm Parseval frame for $\mathcal{W}_i$.  Fix $i$.  To show it is a Parseval frame, we note that for $x \in \mathcal{W}_i$,
\[
\sum_{\ell = 1}^M \langle x, f_{i,\ell} \rangle  =  L_i^\ast L_F^\ast L_F L_i x = L_i^\ast L_i x = P_i x = x,  
\]
as desired.  Because $\{f_\ell\}_{\ell = 1}^M$ is an equal-norm frame, the columns of $L_F$ are equal-norm.  Furthermore, the rows of $L_i$ are orthonormal by construction.  So the rows of $L_F L_i$, i.e., $\{f_{i,\ell}\}$, are linear combinations of orthonormal vectors by coefficients which square-sum to $\frac{2^m}{M}$.  It follows from the Pythagorean theorem that $\Vert f_{i,\ell_1}\Vert = \Vert f_{i,\ell_2} \Vert$ for all $1 \leq \ell_1, \ell_2 \leq M$.  Since $i$ was arbitrary, we are done.
\end{proof}
There are many papers which address the construction of equal-norm Parseval frames such as $\{ f_\ell \}_{\ell = 1}^{M}$, see, for example \cite{CaKo03,CaKBook}.

\section{Conclusion}\label{sec:con}
By utilizing well-known functions, we were able to construct equi-isoclinic, tight Grassmannian fusion frames in a very straightforward manner.  These particular fusion frames are optimally resilient against noise and erasures under a deterministic or a stochastic signal model.  The construction also automatically yields orthonormal bases for the fusion frame subspaces which have very simple coordinates, namely $\{ \pm 1\}$, which are easier to implement.  The theory of Grassmannian fusion frames has exciting potential.  We hope that the theory will expand beyond the equi-distance constructions presented in this paper and in \cite{GrassFus} to packings in spaces which can not yield equi-distance fusion frames.  The construction is then simply extended to yield a Grassmannian fusion frame with local Parseval equal-norm frames.

We also conclude with a quick comment about sparsity.  \emph{Sparse representations} and \emph{compressed sensing} are both terms that have exploded in popularity. The basic premise seems unbelievable at the surface: that it is possible to get a ``correct'' solution to an underdetermined system.  It ends up that the same properties that make Grassmannian (fusion) frames robust to noise and erasures also make them systems that yield optimal sparseness in the following manner.  Suppose we form a matrix $A$ with columns consisting of the vectors of a (redundant) frame, and we would like to find the sparsest $x$ which solves the the underdetermined equation $Ax = b$ for some non-zero $b$.  This problem is NP-hard.  We might try convex relaxation, instead searching for the solution $x$ with minimal $\ell^1$ norm.  However, we would like some guarantee that we would get a desirable solution.  One measure of success is called the \emph{mutual coherence} of the matrix $A$, which is precisely Equation \ref{eqn:corr} \cite{BDE09}.  Low coherence guarantees that sparse solutions to the original problem are unique and that this relaxation will work \cite{DE03,DET06,GN03,Tem00}.  This work has been generalized to fusion frames, with a focus on finding representations in as few as possible subspaces $\mathcal{W}_i$.  Minimizing the \emph{frame coherence} is equivalent to solving the Grassmannian packing problem \cite{BKR10}.  Thus the systems constructed in this paper may be viewed not only as robust but yielding optimal sparseness.

\bibliography{thesis}{}

\newcommand{\etalchar}[1]{$^{#1}$}
\providecommand{\bysame}{\leavevmode\hbox to3em{\hrulefill}\thinspace}
\providecommand{\MR}{\relax\ifhmode\unskip\space\fi MR }
\providecommand{\MRhref}[2]{%
  \href{http://www.ams.org/mathscinet-getitem?mr=#1}{#2}
}
\providecommand{\href}[2]{#2}
\begin{thebibliography}{CCH{\etalchar{+}}11}

\bibitem[BCP{\etalchar{+}}13]{BCPST}
Bernhard~G. Bodmann, Peter~G. Casazza, Jesse~D. Peterson, Ihar Smalyanau, and
  Janet~C. Tremain, \emph{Excursions in harmonic analysis}, vol.~1, ch.~Fusion
  Frames and Unbiased Basic Sequences, pp.~19--34, Birkh\"auser Boston, 2013.

\bibitem[BDE09]{BDE09}
Alfred~M. Bruckstein, David~L. Donoho, and Michael Elad, \emph{From sparse
  solutions of systems of equations to sparse modeling of signals and images},
  SIAM Rev. \textbf{51} (2009), no.~1, 34--81. \MR{MR2481111 (2010d:94012)}

\bibitem[BK06]{BKo06}
J.~J. Benedetto and J.~D. Kolesar, \emph{Geometric properties of {G}rassmannian
  frames for $\mathbb{R}^2$ and $\mathbb{R}^3$}, EURASIP Journal on Applied
  Signal Processing \textbf{49850} (2006).

\bibitem[BKR11]{BKR10}
Petros Boufounos, Gitta Kutyniok, and Holger Rauhut, \emph{Sparse recovery from
  combined fusion frame measurements}, IEEE Trans. Inform. Theory \textbf{57}
  (2011), 3864 -- 3876.

\bibitem[Bod07]{Bod07}
Bernhard~G. Bodmann, \emph{Optimal linear transmission by loss-insensitive
  packet encoding}, Appl. Comput. Harmon. Anal. \textbf{22} (2007), no.~3,
  274--285. \MR{MR2311854 (2008c:42031)}

\bibitem[BP05]{BP05}
Bernhard~G. Bodmann and Vern~I. Paulsen, \emph{Frames, graphs and erasures},
  Linear Algebra Appl. \textbf{404} (2005), 118--146. \MR{MR2149656
  (2006a:42047)}

\bibitem[CCH{\etalchar{+}}11]{CCHK09}
R.~{Calderbank}, P.~G. {Casazza}, A.~{Heinecke}, G.~{Kutyniok}, and
  A.~{Pezeshki}, \emph{Sparse fusion frames: Existence and construction}, Adv.
  Comput. Math. \textbf{35} (2011), no.~1, 1 -- 31.

\bibitem[CHS96]{GrassPack}
John~H. Conway, Ronald~H. Hardin, and Neil J.~A. Sloane, \emph{Packing lines,
  planes, etc.: packings in {G}rassmannian spaces}, Experiment. Math.
  \textbf{5} (1996), no.~2, 139--159. \MR{1418961 (98a:52029)}

\bibitem[CK03]{CaKo03}
Peter~G. Casazza and Jelena Kova\v{c}evi\'c, \emph{Equal-norm tight frames with
  erasures}, Adv. Comput. Math. \textbf{18} (2003), no.~2--4, 387--430.

\bibitem[CK04]{CaK04}
Peter~G. Casazza and Gitta Kutyniok, \emph{Frames of subspaces}, Wavelets,
  frames and operator theory, Contemp. Math., vol. 345, Amer. Math. Soc.,
  Providence, RI, 2004, pp.~87--113. \MR{MR2066823 (2005e:42090)}

\bibitem[CK08]{CaK08}
\bysame, \emph{Robustness of fusion frames under erasures of subspaces and of
  local frame vectors}, Radon transforms, geometry, and wavelets (New Orleans,
  LA, 2006), Contemp. Math., vol. 464, Amer. Math. Soc., Providence, RI, 2008,
  pp.~149--160.

\bibitem[CK12]{CaKBook}
Peter~G. Casazza and Gitta Kutyniok (eds.), \emph{Finite frames}, Birkh\"auser
  Boston, 2012.

\bibitem[CKL08]{CKL08}
Peter~G. Casazza, Gitta Kutyniok, and Shidong Li, \emph{Fusion frames and
  distributed processing}, Appl. Comput. Harmon. Anal. \textbf{25} (2008),
  no.~1, 114--132. \MR{MR2419707 (2009d:42094)}

\bibitem[CKST08]{CKS08}
Peter~G. Casazza, Gitta Kutyniok, Darrin Speegle, and Janet~C. Tremain, \emph{A
  decomposition theorem for frames and the {F}eichtinger conjecture}, Proc.
  Amer. Math. Soc. \textbf{136} (2008), no.~6, 2043--2053. \MR{MR2383510
  (2009f:46032)}

\bibitem[DE03]{DE03}
David~L. Donoho and Michael Elad, \emph{Optimally sparse representation in
  general (nonorthogonal) dictionaries via {$l^1$} minimization}, Proc. Natl.
  Acad. Sci. USA \textbf{100} (2003), no.~5, 2197--2202 (electronic).
  \MR{MR1963681 (2004c:94068)}

\bibitem[DET06]{DET06}
David~L. Donoho, Michael Elad, and Vladimir~N. Temlyakov, \emph{Stable recovery
  of sparse overcomplete representations in the presence of noise}, IEEE Trans.
  Inform. Theory \textbf{52} (2006), no.~1, 6--18. \MR{MR2237332 (2007d:94007)}

\bibitem[DHST08]{DHST08}
I.~S. Dhillon, R.~W. Heath, Jr., T.~Strohmer, and J.~A. Tropp,
  \emph{Constructing packings in {G}rassmannian manifolds via alternating
  projection}, Experiment. Math. \textbf{17} (2008), no.~1, 9--35.
  \MR{MR2410113 (2009f:51022)}

\bibitem[ET07]{BEt07}
Boum\'{e}di\`{e}ne Et-Taoui, \emph{Equi-isoclinic planes in {E}uclidean even
  dimensional spaces}, Adv. Geom. \textbf{7} (2007), no.~3, 379--398.

\bibitem[GKK01]{GKK01}
Vivek~K. Goyal, Jelena Kova{\v{c}}evi{\'c}, and Jonathan~A. Kelner,
  \emph{Quantized frame expansions with erasures}, Appl. Comput. Harmon. Anal.
  \textbf{10} (2001), no.~3, 203--233. \MR{MR1829801 (2002h:94012)}

\bibitem[GN03]{GN03}
R.~Gribonval and M.~Nielsen, \emph{Sparse decompositions in unions of bases},
  IEEE Trans. Inform. Theory \textbf{49} (2003), 3320--3325.

\bibitem[Had93]{Had}
J.~Hadamard, \emph{R\'{e}solution d'une question relative aux
  d\'{e}terminants}, Bull Sciences Math. \textbf{2} (1893), no.~17, 240--246.

\bibitem[Hor07]{HadMat}
K.~J. Horadam, \emph{Hadamard matrices and their applications}, Princeton
  University Press, Princeton, NJ, 2007. \MR{MR2265694 (2008d:05036)}

\bibitem[KPCL09]{GrassFus}
Gitta Kutyniok, Ali Pezeshki, Robert Calderbank, and Taotao Liu, \emph{Robust
  dimension reduction, fusion frames, and {G}rassmannian packings}, Appl.
  Comput. Harmon. Anal. \textbf{26} (2009), no.~1, 64--76. \MR{MR2467935}

\bibitem[LS73]{LemSei73}
Piet W.~H. Lemmens and Johan~J. Seidel, \emph{Equi-isoclinic subspaces of
  euclidean spaces}, Nederl. Akad. Wetensch. Proc. Ser. A 76 Indag. Math.
  \textbf{35} (1973), 98--107.

\bibitem[RGJ06]{GJR06}
C.~J. Rozell, I.~N. Goodman, and D.~H. Johnson, \emph{Feature-based information
  processing with selective attention}, Acoustics, Speech and Signal
  Processing, 2006. ICASSP 2006 Proceedings. 2006 IEEE International Conference
  on, vol.~4, IEEE, 2006, p.~IV.

\bibitem[SH03]{StH03}
Thomas Strohmer and Robert~W. Heath, Jr., \emph{Grassmannian frames with
  applications to coding and communication}, Appl. Comput. Harmon. Anal.
  \textbf{14} (2003), no.~3, 257--275. \MR{MR1984549 (2004d:42053)}

\bibitem[Slo]{Sloane}
N.~J.~A. Sloane, \emph{http://www2.research.att.com/~njas/grass/grasstab.html},
  Online at http://www2.research.att.com/\~njas/grass/grassTab.html.

\bibitem[Syl67]{Syl}
J.~J. Sylvester, \emph{Thoughts on inverse orthogonal matrices, simultaneous
  sign successions and tesselated pavements in two or more colors, with
  application to {N}ewton's rule, ornamental tile work and the theory of
  numbers}, Phil. Mag. \textbf{34} (1867), 461--475.

\bibitem[Tem00]{Tem00}
V.~N. Temlyakov, \emph{Weak greedy algorithms}, Adv. Comput. Math. \textbf{12}
  (2000), no.~2-3, 213--227. \MR{MR1745113 (2001d:41025)}

\end{thebibliography}
\bibliographystyle{amsalpha}

\end{document}